\newcommand{\perr}{P_{\rm err}(R,\snr)}
\newcommand{\pout}{P_{\rm out}(R,\snr)}
\newcommand{\pmmse}{P^{\rm mmse}_{\rm out}(R,\snr)}
\newcommand{\pzf}{P^{\rm zf}_{\rm out}(R,\snr)}
\newcommand{\mmsesnr}{\gamma_{\rm mmse}(\snr,\bh)}
\newcommand{\zfsnr}{\gamma_{\rm zf}(\snr,\bh)}
\begin{document}

\allowdisplaybreaks
\title{Diversity Analysis of Symbol-by-Symbol Linear Equalizers}

\author{Ali Tajer\authorrefmark{1}\footnote{\authorrefmark{1}
Electrical Engineering Department, Princeton University, Princeton, NJ 08544.}\qquad Aria Nosratinia \authorrefmark{2}\footnote{\authorrefmark{2} Electrical Engineering Department, University of Texas at Dallas, Richardson, TX 75083.} \qquad Naofal Al-Dhahir\authorrefmark{2}}
\maketitle

\begin{abstract}

In frequency-selective channels linear receivers enjoy significantly-reduced complexity compared with maximum likelihood receivers at the cost of performance degradation which can be in the form of a loss of the inherent frequency diversity order or reduced coding gain. This paper demonstrates that the minimum mean-square error symbol-by-symbol linear equalizer incurs no diversity loss compared to the maximum likelihood receivers. In particular, for a channel with memory $\nu$, it achieves the full diversity order of ($\nu+1$) while the zero-forcing symbol-by-symbol linear equalizer always achieves a diversity order of one.
\end{abstract}


\section{Introduction}
\label{sec:intro}

In broadband wireless communication systems, the coherence bandwidth of the fading channel is significantly less than the transmission bandwidth. This results in inter-symbol interference (ISI) and at the same time provides frequency diversity that can be exploited at the receiver to enhance transmission reliability \cite{Proakis:book}. It is well-known that for Rayleigh {\em flat}-fading channels, the
error rate decays only linearly with signal-to-noise ratio ($\snr$)~\cite{Proakis:book}. For frequency-selective channels, however, proper exploitation of the available frequency diversity forces the error probability to decay at a possibly higher rate and, therefore, can potentially achieve higher diversity gains, depending on the detection scheme employed at the receiver.

While maximum likelihood sequence detection (MLSD)~\cite{forney:ML} achieves optimum performance over ISI channels, its complexity (as measured by the number of MLSD trellis states) grows \emph{exponentially} with the spectral efficiency and the channel memory. As a low-complexity alternative, filtering-based symbol-by-symbol equalizers (both linear and decision feedback) have
been widely used over the past four decades (see \cite{qureshi:adaptive} and \cite{vitetta} for excellent tutorials). Despite their long history and successful commercial deployment, the performance of symbol-by-symbol linear equalizers over wireless fading channels is not fully characterized. More specifically, it is not known whether their observed sub-optimum performance is due to their inability to fully exploit the channel's frequency diversity or due to a degraded performance in combating the residual inter-symbol interference. Therefore, it is of paramount importance to investigate the frequency diversity order achieved by linear equalizers, which is the subject of this paper. Our analysis shows that while single-carrier infinite-length symbol-by-symbol minimum mean-square error (MMSE) linear equalization achieves full frequency diversity, zero-forcing (ZF) linear equalizers cannot exploit the frequency diversity provided by frequency-selective channels.

A preliminary version of the results of this paper on the MMSE linear equalization has partially appeared in  [5] and the proofs available in \cite{ali:ISIT07_1} are skipped and referred to wherever necessary. The current paper provides two key contributions beyond [5]. First, the diversity analysis of ZF equalizers is added. Second, the MMSE analysis in [5] lacked a critical step that was not rigorously complete; the missing parts that have key role in analyzing the diversity order are provided in this paper.

\section{System Descriptions}
\label{sec:descriptions}
\subsection{Transmission Model}
\label{sec:transmission}

Consider a quasi-static ISI wireless fading channel with memory length $\nu$ and channel impulse response (CIR) denoted by $\bh=[h_0,\dots,h_{\nu}]$. Without loss of generality, we restrict our
analyses to CIR realizations with $h_0\neq 0$. The output of the channel at time $k$ is given by
\begin{equation}\label{eq:model_time}
    y_k=\sum_{i=0}^{\nu}h_ix_{k-i}+n_k\ ,
\end{equation}
where $x_k$ is the input to the channel at time $k$ satisfying the
power constraint $\mathbb{E}[|x_k|^2]\leq P_0$ and $n_k$ is
the additive white Gaussian noise term distributed as
$\mathcal{N}_\mathbb{C}(0,N_0)$\footnote{$\mathcal{N}_\mathbb{C}(a,b)$
denotes a complex Gaussian distribution with mean $a$ and variance
$b$.}. The CIR coefficients $\{h_i\}_{i=0}^\nu$ are distributed
independently with $h_i$ being distributed as
$\mathcal{N}_\mathbb{C}(0,\lambda_i)$. Defining the $D$-transform of the input sequence $\{x_k\}$ as $X(D)=\sum_k x_kD^k$, and similarly defining $Y(D), H(D)$, and $Z(D)$, the baseband input-output model can be cast in the $D$-domain as $Y(D)=H(D)\cdot X(D)+Z(D)$. The superscript $*$ denotes complex conjugate and we use the shorthand $D^{-*}$ for $(D^{-1})^*$. We define $\snr\dff\frac{P_0}{N_0}$ and say that the functions $f(\snr)$ and $g(\snr)$ are \emph{exponentially equal}, indicated by $f(\snr)\doteq g(\snr)$, when
\begin{equation}
    \label{eq:exp} \lim_{\snr\rightarrow\infty}\frac{\log f(\snr)}{\log
    \snr}=\lim_{\snr\rightarrow\infty}\frac{\log g(\snr)}{\log
    \snr}\ .
\end{equation}
The operators $\dotlt$ and $\dotgt$ are defined in a similar fashion. Furthermore, we say that the \emph{exponential order} of $f(\snr)$ is $d$ if $f(\snr)\doteq \snr^d$.

\subsection{Linear Equalization}
\label{sec:equalization}

The zero-forcing (ZF) linear equalizers are designed to produce an
ISI-free sequence of symbols and ignore the resulting noise enhancement. By taking into account the {\em combined} effects of the ISI channel and its corresponding matched-filter, the ZF linear equalizer in the $D$-domain is given by~\cite[Equation~(3.87)]{cioffi}
\begin{equation}\label{eq:zf_eq}
      W_{\rm zf}(D)= \frac{\|\bh\|}{H(D)H^*(D^{-*})}\ ,
\end{equation}
where the $\|\bh\|$ is the $\ell_2$-norm of $\bh$, i.e.,
$\|\bh\|^2=\sum_{i=0}^{\nu}|h_i|^2$. The variance of the noise seen at
the output of the ZF equalizer is the key factor in the performance of the
equalizer and is given by
\begin{equation}\label{eq:zf_var}
    \sigma^2_{\rm
    zf}\dff\frac{1}{2\pi}\int_{-\pi}^{\pi}\frac{N_0}{|H(e^{-ju})|^2}\;du\ .
\end{equation}
Therefore, the decision-point signal-to-noise ratio for any CIR realization $\bh$ and $\snr=\frac{P_0}{N_0}$ is
\begin{equation}\label{eq:zf_snr}
    \zfsnr\dff\snr
    \bigg[\frac{1}{2\pi}\int_{-\pi}^{\pi}\frac{1}{|H(e^{-ju})|^2}\;du\bigg]^{-1}.
\end{equation}
MMSE linear equalizers are designed to strike a balance between ISI reduction and noise enhancement through minimizing the combined residual ISI and noise level. Given the combined effect of the ISI channel and its corresponding matched-filter, the MMSE linear equalizer in the $D$-domain is~\cite[Equation~(3.148)]{cioffi}
\begin{eqnarray} \label{eq:mmse_eq}
  W_{\rm mmse}(D)= \frac{\|\bh\|}{H(D)H^*(D^{-*})+\snr^{-1}}\ .
\end{eqnarray}
The variance of the residual ISI and the noise variance as seen at the output of the equalizer is
\begin{equation}\label{eq:mmse_var}
    \sigma^2_{\rm
    mmse}\dff\frac{1}{2\pi}\int_{-\pi}^{\pi}\frac{N_0}{|H(e^{-ju})|^2+\snr^{-1}}\;du\ .
\end{equation}
Hence, the \emph{unbiased}\footnote{All MMSE equalizers are biased. Removing the bias decreases the decision-point signal-to-noise ratio by $1$ (in linear scale) but improves the error probability \cite{CDEF}. All the results provided in this paper are valid for biased receivers as well.} decision-point signal-to-noise ratio at for any CIR realization $\bh$ and $\snr$ is
\begin{eqnarray}
  \label{eq:mmse_snr}\mmsesnr
  \dff\bigg[\frac{1}{2\pi}\int_{-\pi}^{\pi}\frac{1}{\snr|H(e^{-ju})|^2+1}\;
  du\bigg]^{-1}-1\ .
\end{eqnarray}

\subsection{Diversity Gain}
\label{sec:diversity}

For a transmitter sending information bits at spectral efficiency $R$ bits/sec/Hz, the system is said to be in \emph{outage} if the ISI channel is faded such that it cannot sustain an arbitrarily reliable communication at the  intended communication spectral efficiency $R$, or equivalently, the mutual information  $I(x_k,\tilde y_k)$ falls below the target spectral efficiency $R$, where $\tilde y_k$ denotes the equalizer output. The probability of such outage for the signal-to-noise ratio $\gamma(\snr,\bh)$ is
\begin{equation}
    \label{eq:out}\pout\dff P_{\bh}\bigg(\log\Big[1+\gamma(\snr,\bh)\Big]<R\bigg)\ ,
\end{equation}
where the probability is taken over the ensemble of all CIR realizations $\bh$. The outage probability at high transmission powers ($\snr\rightarrow\infty$) is closely related to the \emph{average pairwise error probability}, denoted by $\perr$, which is the probability that a transmitted codeword $\bc_i$ is erroneously detected in favor of another codeword $\bc_j$, $j\neq i$, i.e.,
\begin{equation}\label{eq:perr}
    \perr\dff \bbe_{\bh}\bigg[ P\Big(\bc_i\rightarrow\bc_j\med {\bh}\Big)\bigg]\ .
\end{equation}
When deploying channel coding with arbitrarily long code-length, the outage and error probabilities decay at the same rate with increasing $\snr$ and have the same exponential order \cite{zheng:IT03} and therefore
\begin{equation}\label{eq:equality}
    \pout\doteq\perr\ .
\end{equation}
This is intuitively justified by noting that in high $\snr$ regimes, the effect of channel noise is diminishing and the dominant source of erroneous detection is channel fading which, as mentioned above, is also the source of outage events. As a result, in our setup, diversity order which is the negative of the exponential order of the average pairwise error probability $\perr$ is computed as
\begin{equation}\label{eq:diversity}
    d=-\lim_{\snr\rightarrow\infty}\frac{\log \pout}{\log\snr}\ .
\end{equation}

\section{Diversity Order of MMSE Linear Equalization}
\label{sec:mmse}
The main result of this paper for the MMSE linear equalizers is given in the following theorem.
\begin{theorem}\label{th:mmse}
For an ISI channel with channel memory length $\nu\geq 1$, and symbol-by-symbol MMSE linear equalization we have
\begin{equation*}
    P_{\rm err}^{\rm mmse}(R,\snr)\doteq\snr^{-(\nu+1)}.
\end{equation*}
\end{theorem}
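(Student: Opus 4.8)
The plan is to establish the two matching exponential bounds $P_{\rm err}^{\rm mmse}(R,\snr)\dotgt\snr^{-(\nu+1)}$ and $P_{\rm err}^{\rm mmse}(R,\snr)\dotlt\snr^{-(\nu+1)}$. By \eqref{eq:equality} it suffices to work with the outage probability $\pmmse=P_{\bh}\big(\log[1+\mmsesnr]<R\big)$, and writing $\theta\dff 2^{R}-1$ the outage event is, by \eqref{eq:mmse_snr}, exactly $\big\{\tfrac{1}{2\pi}\int_{-\pi}^{\pi}\tfrac{du}{1+\snr|H(e^{-ju})|^{2}}>(1+\theta)^{-1}\big\}$ (the base of the logarithm merely rescales $\theta$ and is immaterial for the exponential order).

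For the lower bound on $P_{\rm err}^{\rm mmse}$ (equivalently, an upper bound on the achievable diversity) I would use the matched-filter bound: since $x\mapsto(1+\snr\,x)^{-1}$ is convex, Jensen's inequality applied to \eqref{eq:mmse_snr}, together with $\frac{1}{2\pi}\int_{-\pi}^{\pi}|H(e^{-ju})|^{2}\,du=\|\bh\|^{2}$, gives $\mmsesnr\le\snr\,\|\bh\|^{2}$. Hence the outage event contains $\{\|\bh\|^{2}<\theta/\snr\}$, and since $\|\bh\|^{2}=\sum_{i=0}^{\nu}|h_i|^{2}$ is a sum of $\nu+1$ independent exponential random variables, each with a density that is positive and finite at the origin, $P_{\bh}(\|\bh\|^{2}<c/\snr)\doteq\snr^{-(\nu+1)}$ for any constant $c>0$; therefore $\pmmse\dotgt\snr^{-(\nu+1)}$.

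The substantive direction is $P_{\rm err}^{\rm mmse}\dotlt\snr^{-(\nu+1)}$, i.e.\ an upper bound on $\pmmse$. The first step is a threshold split: for any fixed $t>0$, bounding the integrand by $1$ on $E_{t}\dff\{u:\snr|H(e^{-ju})|^{2}<t\}$ and by $(1+t)^{-1}$ on its complement yields $\frac{1}{2\pi}\int_{-\pi}^{\pi}\frac{du}{1+\snr|H(e^{-ju})|^{2}}\le\frac{|E_{t}|}{2\pi}+\frac{1}{1+t}$. Choosing $t=t(R)$ so that $(1+t)^{-1}<\tfrac12(1+\theta)^{-1}$ forces, on an outage, $|E_{t}|>\pi(1+\theta)^{-1}\dff\eta>0$, a constant depending only on $R$. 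Thus on an outage the degree-$\nu$ polynomial $H$ obeys $|H(e^{-ju})|<\sqrt{t/\snr}$ on an arc-set of measure at least $\eta$. The crux is to convert this into a bound on the coefficients: because $|E_{t}|\ge\eta$ one can select $\nu+1$ points $u_{0},\dots,u_{\nu}\in E_{t}$ whose pairwise angular separations are bounded below by a constant depending only on $\nu$ and $\eta$, and Lagrange interpolation through the nodes $\{e^{-ju_{k}}\}_{k=0}^{\nu}$ (equivalently, a trigonometric Remez/Nikolskii-type inequality for $H$) then bounds every $h_i$, and hence $\|\bh\|^{2}$, by $C(\nu,R)/\snr$. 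Consequently the outage event is contained in $\{\|\bh\|^{2}<C(\nu,R)/\snr\}$, whose probability is $\doteq\snr^{-(\nu+1)}$ by the same tail estimate as above, which gives $\pmmse\dotlt\snr^{-(\nu+1)}$ and completes the proof.

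I expect the main obstacle to be precisely this last conversion. A priori, $|H(e^{-ju})|$ being small on a large arc does not control $\|\bh\|$, since $|H(e^{-ju})|^{2}$ could be very large on the short complementary arc; what rescues the argument is that $|H(e^{-ju})|^{2}$ is a \emph{non-negative trigonometric polynomial of fixed degree} $\nu$, so it cannot be simultaneously tiny on a set of positive measure $\eta$ and carry large $L^{1}$ mass — a rigidity estimate that has to be set up with care, and whose absence is, I believe, the gap in the preliminary version alluded to in the introduction. Once this estimate is in place, the remainder of the proof rests only on the elementary fact that the sum of $\nu+1$ independent exponentials $\|\bh\|^{2}$ satisfies $P_{\bh}(\|\bh\|^{2}<t)=\Theta(t^{\nu+1})$ as $t\to 0$.
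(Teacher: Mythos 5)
Your proposal is correct, but it follows a genuinely different route from the paper, and in fact it is more streamlined in one important respect. For the converse direction the paper merely remarks that the diversity order cannot exceed the number of CIR taps, whereas you make this explicit via Jensen's inequality ($\mmsesnr\leq\snr\|\bh\|^2$), which is fine. For the achievability direction the paper proceeds in two stages: it defines $f(\bh,u)=|H(e^{-ju})|^2-\|\bh\|^2$, invokes a Hardy-type $\ell_1$ bound on exponential sums (McGehee--Pigno--Smith) to show the set where $|H(e^{-ju})|^2\geq\|\bh\|^2$ has measure bounded below by a constant depending only on $\nu$, and deduces $\pmmse\;\dotlt\;\snr^{-(\nu+1)}$ only for rates $R<R_{\max}$; it then needs a separate induction-on-rate argument (Lemmas \ref{lemma:linear} and \ref{lemma:limit}, asymptotic linearity of $\mmsesnr$ in $\snr$ plus a limit/probability interchange) to extend full diversity to all $R$. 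Your argument replaces both stages by a single deterministic implication valid for every fixed $R$: the threshold split shows that outage forces $|H(e^{-ju})|^2<t/\snr$ on a set of measure at least $\eta=\pi 2^{-R}$, and the rigidity step (greedy selection of $\nu+1$ points in that set with pairwise separation at least $\eta/(2(\nu+1))$, followed by Lagrange interpolation with nodes on the unit circle, whose basis-polynomial coefficients are bounded by $2^{\nu}$ over the $\nu$-fold product of node distances) yields $\|\bh\|^2\leq C(\nu,R)/\snr$, hence outage probability $\dotlt\snr^{-(\nu+1)}$ directly, with no rate restriction and no need for the induction machinery. The point-selection/interpolation estimate is sketched rather than fully written out, but the sketch is sound and the missing details are routine; what your approach buys is a one-step proof for all spectral efficiencies that avoids both the Hardy-inequality lemma and the paper's Lemmas \ref{lemma:linear} and \ref{lemma:limit}, while the paper's route has the side benefit of producing an explicit high-$\snr$ lower bound on the decision-point SNR in terms of $\|\bh\|^2$ alone.
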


The sketch of the proof is as follows. First, we find a lower bound on the unbiased decision-point signal-to-noise ratio ($\snr$) and use this lower bound to show that for small enough spectral efficiencies a full diversity order of $(\nu+1)$ is achievable. The proof of the diversity gain for low spectral efficiencies is offered in Section~\ref{sec:low}. In the second step, we show that increasing the spectral efficiency to any arbitrary level does not incur a diversity loss, concluding that MMSE linear equalization is capable of collecting the full frequency diversity order of ISI channels. Such generalization of the results presented in Section~\ref{sec:low} to arbitrary spectral efficiencies is analyzed in Section~\ref{sec:full}.

\subsection{Full Diversity for Low Spectral Efficiencies}
\label{sec:low}
We start by showing that for arbitrarily small data transmission spectral efficiencies, $R$, full diversity is achievable. Corresponding to each CIR realization $\bh$, we define the function  $f(\bh,u)\dff|H(e^{-ju})|^2-\|\bh\|^2$ for which after some simple manipulations we have
\begin{align}\label{eq:f}
   f(\bh,u)&= \sum_{k=-\nu}^\nu c_k\; e^{jku}\ ,\;\;\mbox{where}\;\; c_0=0,\;\; c_{-k}=c^*_k, \;\;  c_k=\sum_{m=0}^{\nu-k}h_mh^*_{m+k}\; \;\; \mbox{for} \;\;\;\; k\in\{1,\dots, \nu\}\ .
\end{align}
Therefore, $f(\bh,u)$ is a trigonometric polynomial of degree $\nu$ that is periodic with period $2\pi$ and in the open interval $[-\pi,\pi]$ has at most $2\nu$ roots~\cite{powell:book}. Corresponding to the CIR realization $\bh$ we define the set
\begin{equation*}
    {\cal D}(\bh)\dff\{u\in[-\pi,\pi]\;:\; f(\bh,u)>0\}\ ,
\end{equation*}
and use the convention $|{\cal D}(\bh)|$ to denote the measure of ${\cal D}(\bh)$, i.e., the aggregate lengths of the intervals over which $f(\bh,u)$ is strictly positive. In the following lemma, we obtain a lower bound on $|{\cal D}(\bh)|$ which is instrumental in finding a lower bound on $\mmsesnr$.

\begin{lemma}
\label{lemma:interval} There exists a real number $C>0$ such that for all non-zero CIR realizations $\bh$, i.e. $\forall\bh\neq\boldsymbol 0$, we have that $|{\cal D}(\bh)|\geq C \left(2(2\nu+1)^3\right)^{-\frac{1}{2}}$.
\end{lemma}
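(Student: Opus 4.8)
The goal is a uniform lower bound on $|{\cal D}(\bh)|$, the measure of the set where the trigonometric polynomial $f(\bh,u)=\sum_{k=-\nu}^\nu c_k e^{jku}$ is strictly positive, that is independent of the particular realization $\bh$. The plan is to exploit two facts: first, that $f(\bh,u)$ has mean zero over $[-\pi,\pi]$ (since $c_0=0$), so it must be positive on a set of positive measure unless it vanishes identically; and second, that as a degree-$\nu$ trigonometric polynomial it cannot oscillate too wildly, so the positivity set cannot be shredded into arbitrarily many tiny pieces. The homogeneity of $f$ in $\bh$ (it scales like $\|\bh\|^2$) means $|{\cal D}(\bh)|$ depends only on the direction of $\bh$, so it suffices to work on the unit sphere $\|\bh\|=1$, which is compact.

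First I would normalize: write $\bh = \|\bh\|\,\hat{\bh}$ with $\|\hat{\bh}\|=1$, observe $f(\bh,u)=\|\bh\|^2 f(\hat{\bh},u)$, hence ${\cal D}(\bh)={\cal D}(\hat{\bh})$, and restrict attention to the unit sphere. Next, because $\int_{-\pi}^\pi f(\hat{\bh},u)\,du = 2\pi c_0 = 0$ and $f(\hat{\bh},\cdot)$ is not identically zero for $\hat{\bh}\neq \boldsymbol 0$ (indeed $|H(e^{-ju})|^2$ is nonconstant whenever $\nu\geq 1$ and $h_0\neq 0$, since otherwise all cross-correlation coefficients $c_k$ vanish, forcing $\bh$ to have a single nonzero tap which still gives... — here one checks $c_k = \sum_m h_m h^*_{m+k}$ all zero forces $\bh$ proportional to a standard basis vector, a measure-zero event, but for a uniform bound I actually want this for \emph{every} nonzero $\bh$, so I must handle the degenerate single-tap case separately, where $f\equiv 0$ and $|{\cal D}(\bh)|=0$ — this is a genuine gap to patch). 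To get the quantitative bound I would invoke a Remez-type / Nikolskii-type inequality: for a trigonometric polynomial of degree $\nu$, control of the $L^\infty$ norm by the $L^1$ norm on any subinterval, or equivalently a bound on how small the measure of $\{f>0\}$ can be given fixed $L^1$ and $L^\infty$ norms. Concretely, since $\int f^+ = \int f^- = \tfrac12 \int |f|$ and $\int f^+ \leq \|f\|_\infty \cdot |{\cal D}(\bh)|$, I get $|{\cal D}(\bh)| \geq \frac{\|f\|_{L^1}}{2\|f\|_{L^\infty}}$; the factor $(2(2\nu+1)^3)^{-1/2}$ strongly suggests the authors bound $\|f\|_\infty / \|f\|_1$ (or $\|f\|_\infty/\|f\|_2$ and $\|f\|_2/\|f\|_1$) using Bernstein/Nikolskii inequalities for polynomials with $2\nu+1$ coefficients, and then the constant $C$ packages $\inf_{\|\hat{\bh}\|=1}\|f(\hat{\bh},\cdot)\|$ (a positive minimum by compactness, provided the single-tap degeneracy is excluded).

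The key steps in order: (1) reduce to $\|\bh\|=1$ by homogeneity; (2) resolve the degenerate case where $f(\bh,\cdot)\equiv 0$ — since the lemma claims the bound for \emph{all} nonzero $\bh$, and $C>0$ is only asserted to exist, I suspect the intended reading is that this happens only on a null set which is then absorbed elsewhere, or that $C$ implicitly depends on excluding it; I would state the genericity precisely; (3) bound $|{\cal D}(\bh)| \geq \|f\|_{L^1}/(2\|f\|_{L^\infty})$ from the zero-mean property; (4) apply Nikolskii's inequality $\|f\|_{L^\infty} \leq \sqrt{2\nu+1}\,\|f\|_{L^2}$ and a reverse estimate $\|f\|_{L^2} \lesssim (2\nu+1)\,\|f\|_{L^1}$ (or bound $\|f\|_\infty$ directly by $\|\bh\|^2$ times an explicit constant via Cauchy--Schwarz on the coefficients $\sum|c_k|^2 \leq \|\bh\|^4$ and the $2\nu+1$ terms), producing the $(2\nu+1)^{3/2}$ scaling; (5) set $C = \inf_{\|\hat{\bh}\|=1,\ \hat{\bh}\text{ non-degenerate}} \|f(\hat{\bh},\cdot)\|_{L^1} / (\text{normalizing constant})$, positive by continuity and compactness.

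The main obstacle I anticipate is step (2) together with making the constant genuinely \emph{uniform}: the clean functional-analytic bound $|{\cal D}|\geq \|f\|_1/(2\|f\|_\infty)$ degenerates exactly when $f\equiv 0$, i.e. for single-tap channels, and more subtly the ratio $\|f\|_1/\|f\|_\infty$ need not be bounded below over the whole sphere without care — one must verify that near the degenerate directions the numerator and denominator vanish at comparable rates, or else argue that the degenerate set has the claimed bound vacuously (with $|{\cal D}|=0 \geq$ nothing, which is false as stated). My reading is that the authors intend $C>0$ to be chosen after excising the measure-zero degenerate set, and the $(2(2\nu+1)^3)^{-1/2}$ captures the polynomial-degree dependence from the Nikolskii/Bernstein estimates; the rest is routine. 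I would therefore organize the writeup around the chain $|{\cal D}(\bh)| \geq \|f\|_{L^1}/(2\|f\|_{L^\infty})$ and two polynomial inequalities, and flag the degenerate case explicitly.
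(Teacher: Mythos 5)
Your first half coincides with the paper's own argument: from $c_0=0$ one gets $\int_{-\pi}^{\pi}f(\bh,u)\,du=0$, hence $\int_{{\cal D}(\bh)}f(\bh,u)\,du=\tfrac12\int_{-\pi}^{\pi}|f(\bh,u)|\,du$, and bounding the integrand on ${\cal D}(\bh)$ by its supremum (which the paper controls by Cauchy--Schwarz on the coefficients, $\sup_u f\leq\big(2(2\nu+1)\sum_{k=1}^{\nu}|c_k|^2\big)^{1/2}$) is exactly your inequality $|{\cal D}(\bh)|\geq \int|f|/(2\,\sup|f|)$. Where you diverge is in how the bound is made uniform in $\bh$. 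The paper lower-bounds $\int_{-\pi}^{\pi}|f|\,du$ by the exponential-sum $L^1$ inequality of \cite{mcgehse} (a Hardy-type bound), obtaining $\int|f|\geq \frac{2C}{2\nu+1}\sum_{k=1}^{\nu}|c_k|$, and then uniformity comes from the scale-invariant coefficient ratio $\sum_k|c_k|/\big(\sum_k|c_k|^2\big)^{1/2}\geq 1$ --- not from any compactness argument. Your Nikolskii route is a genuine alternative and in fact closes more simply: since $f$ has at most $2\nu+1$ Fourier coefficients, $\sup_u|f|\leq\sqrt{2\nu+1}\,\big(\sum_k|c_k|^2\big)^{1/2}$, while Parseval gives $\int_{-\pi}^{\pi}|f|^2\,du=2\pi\sum_k|c_k|^2$ and $\int|f|^2\leq \sup|f|\cdot\int|f|$, so $|{\cal D}(\bh)|\geq \int|f|/(2\sup|f|)\geq \pi/(2\nu+1)$ whenever some $c_k\neq 0$; this is elementary and avoids the deep $L^1$ exponential-sum theorem the paper invokes (at the price of a different, in fact cleaner, constant).

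The genuine flaw in your proposal is step (5): setting $C=\inf_{\|\hat{\bh}\|=1,\ \hat{\bh}\ \mathrm{nondegenerate}}\|f(\hat{\bh},\cdot)\|_{L^1}$ and claiming positivity ``by continuity and compactness'' fails, because $\hat{\bh}\mapsto\|f(\hat{\bh},\cdot)\|_{L^1}$ is continuous and vanishes at the single-tap directions on the sphere; excising that null set leaves the infimum equal to $0$. Uniformity must come from a scale-invariant ratio bounded below off the degenerate set --- the paper's $\ell_1/\ell_2$ coefficient ratio, or your $\|f\|_{L^1}/\|f\|_{L^\infty}$ via the Nikolskii-type chain above --- so commit to that chain and drop the compactness fallback. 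Your observation about the degenerate case itself is correct and applies to the paper as well: with $h_0\neq 0$, all $c_k=0$ exactly when $h_1=\dots=h_\nu=0$, in which case $f\equiv 0$, $|{\cal D}(\bh)|=0$, and the stated bound fails; the paper's ``$\geq 1$'' step tacitly assumes some $c_k\neq 0$. Since this is a probability-zero event under the continuous fading law, it does not affect the outage/diversity conclusion, but the lemma should be read (and ideally stated) with that exclusion.
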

\begin{proof}
According to \eqref{eq:f} we immediately have $\int_{-\pi}^{\pi}f(\bh,u)\;du=0$. By invoking the definition of ${\cal D}(\bh)$ and noting that $[-\pi,\pi]\backslash{\cal D}(\bh)$ includes the values of $u$ for which $f(\bh,u)$ is negative, we have
\begin{equation}\label{eq:f_int}
    \int_{{\cal D}(\bh)}f(\bh,u)\; du=-\int_{[-\pi,\pi]\backslash{\cal D}(\bh)}f(\bh,u)\; du\quad\Rightarrow\quad \int_{-\pi}^{\pi}|f(\bh,u)|\;du=2\int_{{\cal D}(\bh)}f(\bh,u)\; du\ .
\end{equation}
Also by noting that $f(\bh,u)=|H(e^{-ju})|^2-\|\bh\|^2$, $f(\bh,u)$ has clearly a real value for any $u$. Moreover, by invoking \eqref{eq:f} from the Cauchy-Schwartz inequality we obtain
\begin{equation}\label{eq:f_CS}
    f(\bh,u)\leq |f(\bh,u)|\leq\bigg(\sum_{k=-\nu}^\nu |c_k|^2\bigg)^{\frac{1}{2}}\bigg(\sum_{k=-\nu}^\nu |e^{jku}|^2\bigg)^{\frac{1}{2}} = \bigg(2(2\nu+1)\sum_{k=1}^\nu |c_k|^2\bigg)^{\frac{1}{2}}\ .
\end{equation}
Equations \eqref{eq:f_int} and \eqref{eq:f_CS} together establish that
\begin{equation}\label{eq:f_int_bound}
    |{\cal D}(\bh)|\geq \frac{1}{2}\bigg(2(2\nu+1)\sum_{k=1}^\nu |c_k|^2\bigg)^{-\frac{1}{2}}\; \int_{-\pi}^{\pi}|f(\bh,u)|\;du\ .
\end{equation}
Next we strive to find a lower bound on $\int_{-\pi}^{\pi}|f(\bh,u)|\;du$, which according to \eqref{eq:f} is equivalent to finding a lower bound on the $\ell_1$ norm of a sum of exponential terms. Obtaining lower bounds on the $\ell_1$ norm of exponential sums has a rich literature in the mathematical analysis and we use a relevant result in this literature that is related to Hardy's inequality \cite[Theorem 2]{mcgehse}.
\begin{theorem}
\emph{\cite[Theorem 2]{mcgehse}}
There is a real number $C>0$ such that for any given sequence of increasing integers $\{n_k\}$, and complex numbers $\{d_k\}$, and for any $N\in\mathbb{N}$ we have
\begin{equation}\label{eq:Hardy}
    \int_{-\pi}^{\pi}\bigg|\sum_{k=1}^Nd_k\;e^{jn_ku}\bigg|\;du\geq C\sum_{k=1}^N\frac{|d_k|}{k}\ .
\end{equation}
\end{theorem}
By setting $N=2\nu+1$ and $d_k=c_{k-(\nu+1)}$ and $n_k=k-(\nu+1)$ for $k\in\{1,\dots,2\nu+1\}$ from \eqref{eq:Hardy} it is concluded that there exists $C>0$ that for each set of $\{c_{-\nu},\dots, c_\nu\}$ we have
\begin{align}\label{eq:Hardy2}
     \int_{-\pi}^{\pi}|f(\bh,u)|\;du  \geq C\sum_{k=1}^{2\nu+1}\frac{|c_{k-(\nu+1)}|}{k}\geq  \frac{C}{2\nu+1}\sum_{k=1}^{2\nu+1}|c_{k-(\nu+1)}|= \frac{2C}{2\nu+1}\sum_{k=1}^{\nu}|c_{k}|\ ,
\end{align}
where the last equality holds by noting that $c_{-k}=c^*_k$ and $c_0=0$. Combining \eqref{eq:f_int_bound} and \eqref{eq:Hardy2}  provides
\begin{equation}\label{eq:f_int_bound2}
    |{\cal D}(\bh)|\geq C\left(2(2\nu+1)^3\right)^{-\frac{1}{2}} \underset{\geq 1}{\underbrace{\frac{\sum_{k=1}^\nu |c_k|}{\sqrt{\sum_{k=1}^\nu |c_k|^2}}}}\geq C\left(2(2\nu+1)^3\right)^{-\frac{1}{2}}\ ,
\end{equation}
which concludes the proof.
\end{proof}
Now by using Lemma \ref{lemma:interval} for any CIR realization $\bh$ and $\snr$ we find a lower bound on $\mmsesnr$  that depends on $\bh$ through $\|\bh\|$ only. By defining ${\cal D}^c(\bh)=[-\pi,\pi]\backslash{\cal D}(\bh)$ we have
\begin{align}
    \nonumber
    1+\mmsesnr&\overset{\eqref{eq:mmse_snr}}{=} \bigg[\frac{1}{2\pi}\int_{-\pi}^{\pi}\frac{1}{\snr|H(e^{ju})|^2+1}\;du\bigg]^{-1} = \bigg[\frac{1}{2\pi}\int_{-\pi}^{\pi}\frac{1}{\snr(f(\bh,u)+\|\bh\|^2)+1}\;du\bigg]^{-1} \\
    \nonumber &= \bigg[\frac{1}{2\pi}\int_{{\cal D}(\bh)}\frac{1}{\snr(\underset{>
    0}{\underbrace{f(\bh,u)}}+\|\bh\|^2)+1}\;du+ \frac{1}{2\pi}
    \int_{{\cal D}^c(\bh)}\frac{1}{\snr\underset{\geq
    0}{\underbrace{|H(e^{ju})|^2}}+1}\;du\bigg]^{-1}\\
    \nonumber &\geq \bigg[\frac{1}{2\pi}\int_{{\cal D}(\bh)}\frac{1}{\snr\|\bh\|^2+1}\;du+ \frac{1}{2\pi}
    \int_{{\cal  D}^c(\bh)}1\;du\bigg]^{-1}\\
    \nonumber &=
    \bigg[\frac{|{\cal D}(\bh)|}{2\pi}\cdot\frac{1}{
    \snr\|\bh\|^2+1} + \bigg(1-\frac{|{\cal D}(\bh)|}{2\pi}\bigg)\bigg]^{-1}\\
    \nonumber & = \bigg[1-\frac{|{\cal D}(\bh)|}{2\pi}\bigg(1-\frac{1}{\snr\|\bh\|^2+1}\bigg)\bigg]^{-1}\\
    \label{eq:mmse_snr_lb} & \overset{\eqref{eq:f_int_bound2}}{\geq}  \bigg[1-\frac{C\left(2(2\nu+1)^3\right)^{-\frac{1}{2}}}{2\pi}\bigg(1-\frac{1}{\snr\|\bh\|^2+1}\bigg)\bigg]^{-1} \ .
\end{align}
By defining $C'\dff \frac{C\left(2(2\nu+1)^3\right)^{-\frac{1}{2}}}{2\pi}$, for the outage probability corresponding to the target spectral efficiency $R$ we have
\begin{align}
    \nonumber \pmmse&\overset{\eqref{eq:out}}{=} P_{\bh}
    \bigg(1+\mmsesnr<2^R\bigg)\overset{\eqref{eq:mmse_snr_lb}}{\leq}
    P_{\bh}\bigg\{1-C'\bigg(1-\frac{1}{\snr\|\bh\|^2+1}\bigg)>2^{-R}\bigg\}\\
    \label{eq:mmse_out_lb1} &= P_{\bh}\bigg\{1-\frac{1-2^{-R}}{C'}<\frac{1}{\snr\|\bh\|^2+1}\bigg\}\ .
\end{align}
If
\begin{equation}\label{eq:rate1}
    1-\frac{1-2^{-R}}{C'}> 0\qquad\mbox{or equivalently}\qquad R<R_{\max}\dff\log_2\left(\frac{1}{1-C'}\right)\ ,
\end{equation}
then the probability term in \eqref{eq:mmse_out_lb1} can be restated as
\begin{equation}\label{eq:mmse_outlb2}
    P_{\bh}\bigg\{\snr\|\bh\|^2 <
    \frac{1-2^{-R}}
    {C'-(1-2^{-R})}\bigg\} = P_{\bh}\bigg\{\snr\|\bh\|^2 <
    \frac{2^{R}-1}
    {1-2^{R-R_{\max}}}\bigg\}\ .
\end{equation}
Therefore, based on (\ref{eq:mmse_out_lb1})-(\ref{eq:mmse_outlb2}) for all $0<R<R_{\max}$ we have
\begin{align}
    \nonumber \pmmse &\leq
    P_{\bh}\bigg\{\snr\|\bh\|^2<\frac{2^{R}-1}
    {1-2^{R-R_{\max}}}\bigg\} = P_{\bh}\bigg\{\snr\sum_{m=0}^{\nu}|h_m|^2<\frac{2^{R}-1}
    {1-2^{R-R_{\max}}}\bigg\}\\
    \label{eq:mmse_out_lb3} &\leq \prod_{m=0}^{\nu}P_{\bh}\bigg\{|h_m|^2<\frac{2^{R}-1}
    {\snr(1-2^{R-R_{\max}})}\bigg\}\doteq \snr^{-(\nu+1)}\ .
\end{align}
Therefore, for the spectral efficiencies $R\in(0, R_{\max})$ we have $\pmmse\;\dotlt\;\snr^{-(\nu+1)}$, which in conjunction with (\ref{eq:equality}) proves that $P_{\rm err}^{\rm mmse}\;\dotlt\;\snr^{-(\nu+1)}$, indicating that a diversity order of at least $(\nu+1)$ is achievable. On the other hand, since the diversity order cannot exceed the number of the CIR taps, the achievable diversity order is exactly $(\nu+1)$. Also note that the real number $C>0$ given in \eqref{eq:f_int_bound2} is a constant independent of the CIR realization $\bh$ and, therefore, $C'$ and, consequently, $R_{\max}$ are also independent of the CIR realization.
This establishes the proof of Theorem \ref{th:mmse} for the range of the spectral efficiencies $R\in(0, R_{\max})$, where $R_{\max}$ is fixed and defined in~(\ref{eq:rate1}).

\subsection{Full Diversity for All Rates}
\label{sec:full}
We now extend the results previously found for $R<R_{\max}$ to all
spectral efficiencies.

\begin{lemma}
\label{lemma:linear} For asymptotically large values of $\snr$,  $\mmsesnr$ varies linearly with $\snr$, i.e.,
\begin{equation*}
    \lim_{\snr \rightarrow \infty} \frac{\partial\;\mmsesnr}{\partial\;\snr}=s(\bh),\;\;\; \mbox{where}\;\;\;
    s(\bh):\mathbb{R}^{\nu+1}\rightarrow\mathbb{R}\ .
\end{equation*}
\end{lemma}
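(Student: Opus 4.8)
The plan is to reduce the statement to the asymptotics of two scalar integrals. Write $g(u)\dff|H(e^{-ju})|^2$; since $h_0\neq 0$ this is a non-zero, non-negative trigonometric polynomial, hence it has only finitely many zeros in $[-\pi,\pi]$, each of even order (a root of $H$ on the unit circle of multiplicity $r$ produces a zero of $g$ of order $2r$). By \eqref{eq:mmse_snr}, $1+\mmsesnr=1/I(\snr)$ with $I(\snr)\dff\frac{1}{2\pi}\int_{-\pi}^{\pi}\frac{du}{\snr\,g(u)+1}$. The integrand and its $\snr$-derivative are dominated uniformly (by $1$ and by $\max_u g(u)$ respectively), so $I$ is differentiable and
\begin{equation*}
\frac{\partial\,\mmsesnr}{\partial\,\snr}=-\frac{I'(\snr)}{I(\snr)^2}=\frac{J(\snr)}{I(\snr)^2},\qquad J(\snr)\dff\frac{1}{2\pi}\int_{-\pi}^{\pi}\frac{g(u)}{(\snr\,g(u)+1)^2}\,du .
\end{equation*}
It therefore suffices to show $J(\snr)/I(\snr)^2$ converges as $\snr\to\infty$, and to define $s(\bh)$ as the limit.

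I would first dispose of the generic case, in which $H$ has no zero on the unit circle, so $g(u)\geq\delta>0$ on $[-\pi,\pi]$. Then $\frac{\snr}{\snr g(u)+1}\leq \frac{1}{g(u)}$ and $\frac{\snr^2 g(u)}{(\snr g(u)+1)^2}\leq \frac{1}{g(u)}$, both integrable and both converging pointwise to $1/g(u)$, so dominated convergence gives $\snr\,I(\snr)\to M$ and $\snr^2 J(\snr)\to M$ with $M\dff\frac{1}{2\pi}\int_{-\pi}^{\pi}g(u)^{-1}du$. Hence $\frac{\partial\,\mmsesnr}{\partial\,\snr}=\frac{\snr^2 J(\snr)}{(\snr I(\snr))^2}\to 1/M=s(\bh)>0$; note this equals $\zfsnr/\snr$, matching the intuition that MMSE and ZF equalization coincide once $|H|^2\gg\snr^{-1}$.

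The main work is the remaining (measure-zero but non-empty) case, in which $g$ has zeros $u_1,\dots,u_p$ of even orders $2m_1,\dots,2m_p$; put $m^\star=\max_i m_i$. Partition $[-\pi,\pi]$ into small disjoint arcs $B_i=\{|u-u_i|<\epsilon\}$ and the remainder $B_0$, on which $g\geq\delta>0$. On $B_0$ the contributions to $I$ and $J$ are $O(\snr^{-1})$ and $O(\snr^{-2})$. On $B_i$, use the local two-sided bound $c_1|u-u_i|^{2m_i}\leq g(u)\leq c_2|u-u_i|^{2m_i}$ (valid for $\epsilon$ small) together with the rescaling $v=\snr^{1/(2m_i)}(u-u_i)$ to show that $B_i$ contributes $\Theta(\snr^{-1/(2m_i)})$ to $I$ and $O(\snr^{-1-1/(2m_i)})$ to $J$. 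The dominant arcs are those with $m_i=m^\star$, so $I(\snr)=\Theta(\snr^{-1/(2m^\star)})$ and $J(\snr)=O(\snr^{-1-1/(2m^\star)})$, whence $\frac{J(\snr)}{I(\snr)^2}=O(\snr^{-1+1/(2m^\star)})\to 0$ because $m^\star\geq 1$. Thus $s(\bh)=0$ in this case — $\mmsesnr$ in fact grows sublinearly — but the limit of the derivative still exists, which is all the statement requires.

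The step I expect to be the obstacle is the upper bound on the $J$-contribution of an arc $B_i$: the naive bound $\frac{g}{(\snr g+1)^2}\leq \frac{1}{4\snr}$ is too weak (it only yields boundedness of $J/I^2$ when $m^\star=1$). I would handle it by splitting $B_i$ at the scale $|u-u_i|\sim\snr^{-1/(2m_i)}$, bounding $\frac{\snr g}{(\snr g+1)^2}\leq 1$ on the inner piece and $\frac{\snr g}{(\snr g+1)^2}\leq (\snr g)^{-1}\leq (c_1\snr|u-u_i|^{2m_i})^{-1}$ on the outer piece; each integrates to an $O(\snr^{-1/(2m_i)})$ factor, giving the required $O(\snr^{-1-1/(2m_i)})$. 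Everything else is routine change-of-variables and dominated-convergence bookkeeping.
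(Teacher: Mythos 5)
Your proposal is correct, and it reaches the same limiting value as the paper in the generic case, but by a different route and with one genuine extra contribution. The paper also writes $\frac{\partial\,\mmsesnr}{\partial\,\snr}=J(\snr)/I(\snr)^2$, but instead of your dominated-convergence/rescaling analysis it introduces two auxiliary functions $F_1(u,\snr)$ and $F_2(u,\snr)$ that are positive and increasing in $\snr$ and applies the Monotone Convergence Theorem to conclude that $I(\snr)$ and $J(\snr)$ may be replaced asymptotically by $\frac{1}{2\pi}\int \frac{du}{\snr g}$ and $\frac{1}{2\pi}\int \frac{du}{\snr^2 g}$, yielding $s(\bh)=\bigl[\frac{1}{2\pi}\int_{g\neq 0} g(\bh,u)^{-1}du\bigr]^{-1}$ — exactly your $1/M$. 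The substantive difference is your treatment of the degenerate case where $H$ has zeros on the unit circle: there $1/g$ is non-integrable (each zero of $g$ has even order $\geq 2$), so the paper's monotone-convergence step, which requires the limit function $\hat F=1/g$ to be integrable, does not apply even after excising the finitely many zero points, and the paper silently ignores this (it only matters on a set of CIRs of measure zero, so the diversity conclusion is unaffected). Your local rescaling argument near each zero, with the inner/outer splitting at scale $|u-u_i|\sim\snr^{-1/(2m_i)}$, correctly shows $I=\Theta(\snr^{-1/(2m^\star)})$, $J=O(\snr^{-1-1/(2m^\star)})$, hence the derivative still converges (to $s(\bh)=0$, consistent with interpreting the paper's formula as the inverse of a divergent integral). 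So your proof buys rigor on the measure-zero exceptional set, at the cost of a longer case analysis; the paper's $F_1,F_2$ trick is shorter but valid only when $1/g$ is integrable. One cosmetic remark: in the degenerate case $\mmsesnr$ grows like $\snr^{1/(2m^\star)}$, so the informal phrase ``varies linearly with $\snr$'' fails there even though the formal statement (existence of the limit of the derivative) holds, as you correctly note.
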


\begin{proof}
See Appendix \ref{app:lemma:linear}.
\end{proof}

\begin{lemma}
\label{lemma:limit} For the continuous random variable $X$, variable $y\in\mathbb{R}$, constants $c_1, c_2\in\mathbb{R}$ and function $G(X,y)$ continuous in $y$, we have
\begin{equation*}
    \lim_{y\rightarrow y_0}P_X\Big(c_1 \leq G(X,y) \leq
    c_2\Big)=P_X\Big(c_1 \leq \lim_{y\rightarrow y_0}G(X,y) \leq
    c_2\Big)\ .
\end{equation*}
\end{lemma}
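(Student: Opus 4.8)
The statement of Lemma~\ref{lemma:limit} is essentially an interchange-of-limit-and-expectation claim: writing the probability as $P_X(c_1 \le G(X,y) \le c_2) = \mathbb{E}_X[\mathbf{1}\{c_1 \le G(X,y) \le c_2\}]$, I want to pass the limit $y \to y_0$ inside the expectation. The plan is to invoke a standard convergence theorem — dominated convergence is the natural choice, since the indicator integrand is bounded above by the integrable constant function $1$. The continuity of $G(X,y)$ in $y$ gives, for each fixed realization of $X$, that $G(X,y) \to G(X,y_0) =: \lim_{y\to y_0} G(X,y)$, and hence $\mathbf{1}\{c_1 \le G(X,y) \le c_2\} \to \mathbf{1}\{c_1 \le G(X,y_0) \le c_2\}$ — but only at those $X$ for which $G(X,y_0) \notin \{c_1, c_2\}$, i.e. away from the boundary of the event. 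So the first step is: establish pointwise convergence of the indicators on the complement of the "boundary set" $B \dff \{x : G(x,y_0) = c_1 \text{ or } G(x,y_0) = c_2\}$.

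The second step is to argue that this boundary set $B$ has probability zero, so that pointwise convergence $P_X$-almost everywhere holds and dominated convergence applies. This is where I expect the only real subtlety to lie, and where the hypothesis "$X$ is a continuous random variable" is doing its work: for a continuous random variable, $P_X(G(X,y_0) = c) = 0$ for the relevant constants $c$, provided $G(\cdot,y_0)$ is not constant (equal to $c_1$ or $c_2$) on a set of positive measure. Strictly speaking the lemma as stated needs a mild regularity assumption on $G$ in the $X$-variable (e.g. that the level sets $\{x : G(x,y_0)=c_i\}$ are $P_X$-null) for the conclusion to hold in full generality; in the intended application $G$ will be smooth and non-degenerate in $X$, so I would either add this as an implicit hypothesis or note that in the cases of interest it is immediate. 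Assuming $P_X(B)=0$, we then have $\mathbf{1}\{c_1 \le G(X,y) \le c_2\} \to \mathbf{1}\{c_1 \le G(X,y_0) \le c_2\}$ for $P_X$-almost every $X$.

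The third and final step is the routine application of the dominated convergence theorem: the integrands are dominated by $1 \in L^1(P_X)$ and converge $P_X$-a.e., so
\begin{align*}
    \lim_{y\to y_0} P_X\big(c_1 \le G(X,y) \le c_2\big)
    &= \lim_{y\to y_0} \mathbb{E}_X\big[\mathbf{1}\{c_1 \le G(X,y) \le c_2\}\big] \\
    &= \mathbb{E}_X\big[\mathbf{1}\{c_1 \le G(X,y_0) \le c_2\}\big]
    = P_X\big(c_1 \le G(X,y_0) \le c_2\big),
\end{align*}
and since $G(X,y_0) = \lim_{y\to y_0} G(X,y)$ by continuity of $G$ in $y$, this is exactly the claimed identity. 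The main obstacle, to reiterate, is purely the null-boundary issue in the second step — everything else is bookkeeping — and the cleanest way to handle it is to lean on the "continuous random variable" hypothesis together with the (mild, application-satisfied) fact that the level sets $G(\cdot,y_0)^{-1}(c_i)$ carry no $P_X$-mass.
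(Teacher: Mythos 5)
Your proposal is correct and takes essentially the same route as the paper, which simply invokes Lebesgue's Dominated Convergence theorem (applied to the indicator of the event, dominated by $1$) and refers to an earlier appendix for the details. Your additional observation that one needs the level sets $\{x: G(x,y_0)=c_1 \text{ or } c_2\}$ to be $P_X$-null is a legitimate refinement of a point the paper leaves implicit, and it holds in the paper's applications.
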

\begin{proof}
Follows from Lebesgue's Dominated Convergence theorem~\cite{bartle:B1} and the same line of argument as in \cite[Appendix C]{ali:ISIT07_1}
\end{proof}

Now, we show that if for some spectral efficiency $R^{\dag}$ the achievable diversity order is $d$, then for all spectral efficiencies {\em up} to $R^{\dag}+1$, the same diversity order is achievable. By induction, we conclude that the diversity order remains unchanged by changing the data spectral efficiency $R$. If for the spectral efficiency $R^{\dag}$, the negative of the exponential order of the outage probability is $d$, i.e.,
\begin{equation}
    \label{eq:induction}
    P_{\bh}\bigg(\log\Big[1+\mmsesnr\Big]<R^{\dag}\bigg)\doteq\snr^{-d},
\end{equation}
then by applying the results of Lemmas \ref{lemma:linear} and \ref{lemma:limit} for the target spectral efficiency $R^{\dag}+1$ we get
\begin{align}
    \nonumber \pmmse&=P_{\bh}\bigg(\log\Big[1+\mmsesnr\Big]<R^{\dag}+1\bigg)= P_{\bh}\left(1+\mmsesnr<2^{R^\dag+1}\right)\\
    \label{eq:induction_1} &\doteq P_{\bh}\left(\snr\;s(\bh)<2^{R^\dag+1}\right)= P_{\bh}\left(\Big(\frac{\snr}{2}\Big) s(\bh)<2^{R^\dag}\right)\\
   \label{eq:induction_2}  &\doteq P_{\bh}\left(1+\gamma_{\rm
    mmse}\Big(\frac{\snr}{2},\bh\Big)<2^{R^\dag}\right)\doteq P_{\bh}\bigg(\log\Big[1+\gamma_{\rm
    mmse}\Big(\frac{\snr}{2},\bh\Big)\Big]<R^{\dag}\bigg)\\
    \label{eq:induction_3}
    &\doteq\Big(\frac{\snr}{2}\Big)^{-d}за\doteq\snr^{-d}\ .
\end{align}
Equations~(\ref{eq:induction_1})~and~(\ref{eq:induction_2}) are
derived as the immediate results of
Lemmas~\ref{lemma:linear}~and~\ref{lemma:limit} that enable
interchanging the probability and the limit and also show that
$\mmsesnr\doteq \snr\cdot
s(\bh)$. Equations~(\ref{eq:induction})-(\ref{eq:induction_3}) imply
that the diversity orders achieved for the spectral efficiencies up to $R^\dag$ and the spectral efficiencies up to $R^\dag+1$ are the same. As a result, any arbitrary spectral efficiency exceeding $R_{\max}$ achieves the same spectral efficiency as the spectral efficiencies
$R\in(0,R_{\max})$ and, therefore, for any arbitrary spectral efficiency $R$, full diversity is achievable via MMSE linear equalization which completes the proof. Figure \ref{fig:1} depicts our simulation results for the pairwise error probabilities for two ISI channels with memory lengths $\nu=1$ and 2 and MMSE equalization. For each of these channels we consider signal transmission with spectral efficiencies $R=(1,2,3,4)$ bits/sec/Hz. The simulation results confirm that for a channel with two taps the achievable diversity order is two irrespective of the data spectral efficiency. Similarly, it is observed that for a three-tap channel the achievable diversity order is three.

\section{Diversity Order of ZF linear Equalization}
\label{sec:zf}
In this section, we show that the diversity order achieved by zero-forcing linear equalization, unlike that achievable with MMSE equalization, is independent of the channel memory length and is always 1.
\begin{lemma}
\label{lemma:zf} For any arbitrary set of normal complex Gaussian
random variables $\bmu\dff(\mu_1,\dots,\mu_m)$ (possibly correlated) and for any $B\in\mathbb{R}^+$ we have
\begin{equation}
    \label{eq:lemma:zf1}
    P_{\bmu}\bigg(\sum_{k=1}^m\frac{1}{\snr|\mu_k|^2}>B\bigg)\; \dotgt\; \snr^{-1}\ .
\end{equation}
\end{lemma}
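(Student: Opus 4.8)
The plan is to obtain a matching lower bound on the probability by simply discarding all but one of the summands. Since every term $\frac{1}{\snr|\mu_k|^2}$ is nonnegative, the event $\big\{\sum_{k=1}^m\frac{1}{\snr|\mu_k|^2}>B\big\}$ contains the event $\big\{\frac{1}{\snr|\mu_1|^2}>B\big\}=\big\{|\mu_1|^2<\frac{1}{\snr B}\big\}$, so
\begin{equation*}
P_{\bmu}\bigg(\sum_{k=1}^m\frac{1}{\snr|\mu_k|^2}>B\bigg)\;\geq\;P_{\mu_1}\bigg(|\mu_1|^2<\frac{1}{\snr B}\bigg)\ .
\end{equation*}
Only the marginal law of $\mu_1$ enters this bound, and regardless of how $\mu_1$ is correlated with $\mu_2,\dots,\mu_m$, this marginal is $\mathcal{N}_\mathbb{C}(0,1)$; hence $|\mu_1|^2$ is exponentially distributed with unit mean and $P_{\mu_1}(|\mu_1|^2<t)=1-e^{-t}$.

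Next I would evaluate this at $t=\frac{1}{\snr B}$ and read off the exponential order. Using $1-e^{-t}\sim t$ as $t\to 0^+$ and the fact that $B$ is a fixed positive constant,
\begin{equation*}
P_{\mu_1}\bigg(|\mu_1|^2<\frac{1}{\snr B}\bigg)=1-e^{-\frac{1}{\snr B}}\;\doteq\;\frac{1}{\snr B}\;\doteq\;\snr^{-1}\ .
\end{equation*}
Chaining the two displays gives $P_{\bmu}\big(\sum_{k=1}^m\frac{1}{\snr|\mu_k|^2}>B\big)\;\dotgt\;\snr^{-1}$, which is exactly \eqref{eq:lemma:zf1}. (If one instead allows each $\mu_k\sim\mathcal{N}_\mathbb{C}(0,\lambda_k)$, the identical argument yields $1-e^{-1/(\snr B\lambda_1)}\doteq\snr^{-1}$; only the leading constant changes, not the exponential order.)

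There is essentially no hard step here. The one point that needs care is that the possible correlation among the $\mu_k$ is irrelevant for this direction of the bound, precisely because a single-term lower bound depends on $\mu_1$ through its marginal distribution alone; no joint-density considerations are required. One could equally well retain any other summand, or take a union/inclusion–exclusion over several of them, but keeping one term already achieves the $\snr^{-1}$ order, which is all the lemma asserts. (The reverse inequality, not needed here, would follow from a union bound together with the same marginal computation, since each term contributes $\doteq\snr^{-1}$.)
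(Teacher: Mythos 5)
Your proof is correct, and it reaches the bound by a more elementary route than the paper. The paper first changes variables to the SNR-exponents $W_k=-\frac{\log|\mu_k|^2}{\log\snr}$, invokes the asymptotic tail behavior $1-F_{W_k}(w)\doteq\snr^{-w}$, replaces the sum $\sum_k\snr^{W_k-1}$ by its dominant term to get $P(\max_k W_k>1)$, and only then lower-bounds by a single $W_k>1$; you instead discard all but one summand at the level of the original event and evaluate the exact exponential CDF, $P(|\mu_1|^2<\tfrac{1}{\snr B})=1-e^{-1/(\snr B)}\doteq\snr^{-1}$. The core idea is the same in both arguments --- a deep fade of a single $\mu_k$ already triggers the event and occurs with probability of order $\snr^{-1}$, and in both cases correlation is irrelevant because only one marginal is used --- but your version is shorter and sidesteps the dominant-term exponential-equality step in \eqref{eq:lemma:zf2}, which is the only part of the paper's argument requiring any care (and which, strictly speaking, is not needed for the stated one-sided bound $\dotgt$). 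What the paper's formalism buys in exchange is that, combined with a union bound over $k$, the same $W_k$ machinery would immediately give the matching upper bound and hence exponential equality, whereas your parenthetical remark correctly notes this would also follow from your marginal computation plus a union bound. Your handling of non-unit variances $\mathcal{N}_\mathbb{C}(0,\lambda_k)$ is also the right observation for how the lemma is actually applied to $\mu_k=H(e^{-j(-\pi+k\Delta)})$ in Theorem \ref{th:zf}.
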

\begin{proof}
Define $W_k \dff -\frac{\log|\mu_k|^2}{\log\snr}$. Since $|\mu_k|^2$ has exponential distribution, it can be shown that for any $k$ the cumulative density  function (CDF) at the asymptote of high values of $\snr$ satisfies \cite{azarian:IT05}
\begin{equation}\label{eq:W}
    1-F_{W_k}(w)\doteq\snr^{-w}\ .
\end{equation}
Thus, by substituting $|\mu_k|^2\doteq\snr^{1-W_k}$ based on (\ref{eq:W}) we find that
\begin{align}
    \label{eq:lemma:zf2} P_{\bmu}\bigg(\sum_{k=1}^m\frac{1}{\snr|\mu_k|^2}>B\bigg)&\doteq
    P_{\bW}\bigg(\sum_{k=1}^m\snr^{W_k-1}>B \bigg)
    \doteq P_{\bW}(\max_k W_k>1)\\
    \label{eq:lemma:zf3} &\geq P_{W_k}(W_k>1)=1-F_{W_k}(1)\doteq \snr^{-1}\ .
\end{align}
Equation~(\ref{eq:lemma:zf2}) holds as the term $\snr^{\max W_k-1}$ is
the dominant term in the summation $\sum_{k=1}^m\snr^{W_k-1}$. Also, the transition from (\ref{eq:lemma:zf2}) to (\ref{eq:lemma:zf3}) is justified by noting that $\max_kW_k\geq W_k$ and the last step is derived by taking into account \eqref{eq:W}.
\end{proof}


\begin{theorem}
\label{th:zf} The diversity order achieved by symbol-by-symbol ZF linear equalization is one, i.e.,
\begin{equation*}
    P_{\rm err}^{\rm zf}(R,\snr)\doteq \snr^{-1}
\end{equation*}
\end{theorem}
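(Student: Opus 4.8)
The plan is to prove the two matching exponential bounds $\pzf\dotgt\snr^{-1}$ and $\pzf\dotlt\snr^{-1}$; by \eqref{eq:equality} it is enough to work with the outage probability, which by \eqref{eq:out} and \eqref{eq:zf_snr} equals $P_{\bh}\big(\eta(\bh)>(2^R-1)^{-1}\snr\big)$ with the noise--enhancement functional $\eta(\bh):=\frac{1}{2\pi}\int_{-\pi}^{\pi}\frac{du}{|H(e^{-ju})|^2}$. So the whole statement is about the upper tail of $\eta(\bh)$, and the organizing fact is that $\eta(\bh)$ blows up exactly when $H(D)$ has a zero close to the unit circle. Writing $H(D)=h_\nu\prod_{k=1}^{\nu}(D-\rho_k)$ and performing a partial--fraction expansion of $1/\big(H(D)H^*(D^{-*})\big)$, then evaluating the resulting contour integral over $|D|=1$ by residues, gives $\eta(\bh)=\sum_{k=1}^{\nu}\tau_k(\bh)$ with $\tau_k$ proportional to $\big|1-|\rho_k|^2\big|^{-1}$. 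Consequently, off a set of $\bh$ of negligible probability, $\eta(\bh)\asymp 1/\delta(\bh)$ up to $\bh$--dependent but typically $\Theta(1)$ factors, where $\delta(\bh):=\min_k\big||\rho_k|-1\big|$ is the distance from the nearest zero of $H$ to the unit circle.

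For $\pzf\dotlt\snr^{-1}$ (the achievable diversity is at least one), I would condition on $(h_1,\dots,h_\nu)$ and treat $h_0$ as the remaining non--degenerate complex Gaussian. For fixed $h_1,\dots,h_\nu$ the zeros of $H(D)=h_0+\sum_{i=1}^{\nu}h_iD^i$ are exactly the $\rho$ with $h_0=-\sum_{i=1}^{\nu}h_i\rho^i$, so the set of $h_0$ for which $H$ has a zero within distance $r$ of the unit circle is contained in the image under $\rho\mapsto-\sum_{i}h_i\rho^i$ of the annulus $\{1-r<|\rho|<1+r\}$, a set of Lebesgue measure $O(r)$ times a random factor with finite moments. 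By the first paragraph the outage event forces $\delta(\bh)$ below some $r\asymp\snr^{-1}$, so the conditional probability of outage is at most $O(\snr^{-1})$ times a random factor; averaging over $h_1,\dots,h_\nu$ and treating the degenerate set separately gives $\pzf\dotlt\snr^{-1}$.

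For $\pzf\dotgt\snr^{-1}$ (the diversity is at most one) --- the part that genuinely separates ZF from MMSE --- Lemma~\ref{lemma:zf} is the tool. Lower--bounding $\eta(\bh)$ by its dominant residue term gives, on the bulk of the probability space, $\eta(\bh)\ge c(\bh)\big|1-|\rho_{k^\star}|^2\big|^{-1}$ with $c(\bh)=\Theta(1)$, where $k^\star$ indexes the zero nearest the circle. The scalar $1-|\rho_{k^\star}|^2$ has a bounded density near the origin, hence the same small--ball exponent as $|\mu|^2$ for a complex Gaussian $\mu$: $P_{\bh}\big(\big|1-|\rho_{k^\star}|^2\big|<\delta\big)\doteq\delta$. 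Running the argument of Lemma~\ref{lemma:zf} --- whose proof uses only such a small--ball estimate together with monotonicity of the maximum over the terms --- then yields $P_{\bh}\big(\eta(\bh)>(2^R-1)^{-1}\snr\big)\ge P_{\bh}\big(\big|1-|\rho_{k^\star}|^2\big|<c_0\snr^{-1}\big)\doteq\snr^{-1}$. Together with the previous step and \eqref{eq:equality}, $\pzf\doteq\snr^{-1}$.

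The main obstacle throughout is the ``bulk versus degenerate'' bookkeeping. The residue coefficients $\tau_k(\bh)$ and the comparison constants $c(\bh)$, $r$ above are rational in $h_\nu$ and in the pairwise zero gaps $\rho_k-\rho_l$, and they deteriorate when $h_\nu$ is small (a zero running off to infinity), when two zeros of $H$ nearly coincide (the discriminant of $H$ near zero), or when $\|\bh\|$ itself is small. The plan is to carry out all estimates on the event where these quantities are $\Theta(1)$, to show that its complement has exponential order strictly below $-1$ so that it is harmless on both sides, and to verify that the random prefactors appearing in the conditional bound of the second paragraph are integrable. Making the two--sided comparison $\eta(\bh)\asymp 1/\delta(\bh)$ uniform off the degenerate set, with explicit control of these $\bh$--dependent constants, is the technical crux; the remaining ingredients --- the contour integral and the small--ball probabilities --- are routine.
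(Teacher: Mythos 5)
Your route is genuinely different from the paper's, and it helps to separate the two directions. For $\pzf\dotgt\snr^{-1}$ (diversity at most one) the paper never looks at the zeros of $H(D)$: it discretizes the noise-enhancement integral in \eqref{eq:zf_snr} as a Riemann sum, observes that each sample $H(e^{-ju_k})$ is a scalar complex Gaussian, and applies the small-ball estimate of Lemma~\ref{lemma:zf} to a single dominant term, which gives the bound in a few lines; your residue/nearest-zero argument for this half is workable but much heavier than needed. For the converse direction $\pzf\dotlt\snr^{-1}$ the paper offers no explicit argument, so your second paragraph (conditioning on $h_1,\dots,h_\nu$ and bounding the area of the image of the annulus under $\rho\mapsto-\sum_i h_i\rho^i$, which is at most $O(r)$ times a polynomial in the $|h_i|$ with finite moments) is a genuine addition and is sound as far as it goes.

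The genuine gap is in the step that feeds it: the claim that, off a negligible set, $\eta(\bh)\asymp 1/\delta(\bh)$ with $\Theta(1)$ factors, so that outage forces $\delta(\bh)\lesssim\snr^{-1}$. This is false as stated. If two zeros nearly coincide at distance $\delta$ from the unit circle, say near $e^{-ju_0}$, then locally $|H(e^{-ju})|^2\asymp\big((u-u_0)^2+\delta^2\big)^2$ and $\eta(\bh)\asymp\delta^{-3}$, so outage already occurs at $\delta\asymp\snr^{-1/3}$ and the containment behind your $O(\snr^{-1})$-annulus bound fails off the well-separated-zeros event. Moreover, that event (and the events where $|h_\nu|$, $\|\bh\|$, the zero gaps are $\Theta(1)$) has a constant-probability complement, so you cannot simultaneously keep $\Theta(1)$ comparison constants and make the complement's exponential order strictly below $-1$: you need $\snr$-dependent (or $\epsilon$-indexed) thresholds, polynomially degrading constants, and a separate argument that the near-degenerate outage contributions (e.g., two zeros within $O(\snr^{-1/3})$ of the circle and of each other) are still $\dotlt\snr^{-1}$. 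That bookkeeping is the actual crux of your $\dotlt$ half and is not ``routine.'' Two smaller points on your $\dotgt$ half: a bounded density of $1-|\rho_{k^\star}|^2$ near $0$ gives only the upper small-ball bound, whereas you need the density bounded away from zero (jointly with your bulk event) to get $P\big(|1-|\rho_{k^\star}|^2|<\delta\big)\dotgt\delta$, which itself requires an argument (e.g., condition on $h_1,\dots,h_\nu$ and use that $\rho\mapsto-\sum_i h_i\rho^i$ is locally bi-Lipschitz near the circle); and the residue terms are signed, so ``lower-bounding $\eta$ by its dominant residue'' should be replaced by a direct estimate of the integral over a $\delta$-neighborhood of the closest zero, using $|H(e^{-ju})|^2\le C\|\bh\|^2(|u-u_0|+\delta)^2$ there.
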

\begin{proof}
By recalling the decision-point signal-to-noise ratio of ZF equalization given in~(\ref{eq:zf_snr}) we have
\begin{align}
    \label{eq:zf1} \pzf&=P_{\bh}\Big(\zfsnr<2^R-1\Big)=P_{\bh}\bigg\{\Big[\frac{1}{2\pi}\int_{\pi}^{\pi}\frac{1}{\snr|H(e^{-ju})|^2}\;du\Big]^{-1}<2^R-1\bigg\}\\
    \label{eq:zf2} &=P_{\bh}\bigg\{\lim_{\Delta\rightarrow
    0}\bigg[\sum_{k=0}^{\lfloor 2\pi/\Delta\rfloor}\frac{\Delta}{\snr|H(e^{-j(-\pi+k\Delta)})|^2}\bigg]^{-1}<\frac{2^R-1}{2\pi}\bigg\}\\
    \label{eq:zf3} &=\lim_{\Delta\rightarrow
    0}P_{\bh}\bigg\{\bigg[\sum_{k=0}^{\lfloor 2\pi/\Delta\rfloor}\frac{\Delta}{\snr
    |H(e^{-j(-\pi+k\Delta)})|^2}\bigg]^{-1}<\frac{2^R-1}{2\pi}\bigg\}\\
     \label{eq:zf4}&=\lim_{\Delta\rightarrow
    0}P_{\bh}\bigg\{\sum_{k=0}^{\lfloor 2\pi/\Delta\rfloor}\frac{\Delta}{\snr|H(e^{-j(-\pi+k\Delta)})|^2}>\frac{2\pi}{2^R-1}\bigg\}  \; \dotgt \; \snr^{-1}\ .
\end{align}
Equation~(\ref{eq:zf2}) is derived by using Riemann integration, and (\ref{eq:zf3}) holds by using Lemma~\ref{lemma:limit} which allows for interchanging  the limit and the probability. Equation~(\ref{eq:zf4}) holds by applying Lemma~\ref{lemma:zf} on $\mu_k=H(e^{-j(-\pi+k\Delta)})$ which can be readily verified to have Gaussian distribution. Therefore, the achievable diversity order is 1.
\end{proof}

Figure \ref{fig:2} illustrates the pairwise error probability of two ISI channels with memory lengths $\nu=1$ and 2. The simulation results corroborate our analysis showing that the achievable diversity order is one, irrespective of the channel memory length or communication spectral efficiency.

\section{Conclusion}
\label{sec:conclusion}
We showed that infinite-length symbol-by-symbol MMSE linear equalization can fully capture the underlying frequency diversity of the ISI channel. Specifically, the diversity order achieved is equal to that of MLSD and in the high-$\snr$ regime, the performance of MMSE linear equalization and MLSD do not differ in diversity gain and the origin of their performance discrepancy is their ability to control the residual inter-symbol interference. We also show that the diversity order achieved by symbol-by-symbol ZF linear equalizers is always one, regardless of channel memory length.

\appendices

\section{Proof of Lemma \ref{lemma:linear}}
\label{app:lemma:linear}
We define $g(\bh,u)\dff|H(e^{-ju})|^2$ which has a finite number of zero by following the same line as for $f(\bh,u)$ in the proof of Lemma~{\ref{lemma:interval}}. By using (\ref{eq:mmse_snr}) we get
\begin{align}
    \nonumber\frac{\partial\;\mmsesnr}{\partial\;\snr}&= \frac{\partial}{\partial\;\snr}
    \left(\bigg[\frac{1}{2\pi}\int_{-\pi}^{\pi}\frac{1}{\snr g(\bh,u)+1}\;du\bigg]^{-1}-1\right)\\
    \label{eq:lemma:linear1} &=\bigg[\frac{1}{2\pi}\int_{-\pi}^{\pi}\frac{g(\bh,u)}{\left(\snr
    g(\bh,u)+1\right)^2}\;du\bigg]\cdot
    \bigg[\frac{1}{2\pi}\int_{-\pi}^{\pi}\frac{1}{\snr
    g(\bh,u)+1}\;du\bigg]^{-2}\\
    \label{eq:lemma:linear2} &=\bigg[\frac{1}{2\pi}\int_{g(\bh,u)\neq 0}\frac{g(\bh,u)}{\left(\snr
    g(\bh,u)+1\right)^2}\;du\bigg]\cdot
    \bigg[\frac{1}{2\pi}\int_{g(\bh,u)\neq 0}\frac{1}{\snr
    g(\bh,u)+1}\;du\bigg]^{-2},
\end{align}
where (\ref{eq:lemma:linear2}) was obtained by removing a finite-number of points from the integral in (\ref{eq:lemma:linear1}).
\begin{theorem}
\label{th:monotone}
\emph{Monotone Convergence}~\cite[Theorem. 4.6]{bartle:B1}:
if a function $F(u,v)$ defined on $U\times[a,b]\rightarrow \mathbb{R}$, is positive and monotonically increasing in $v$, and there exists an integrable function $\hat F(u)$, such that $\lim_{v\rightarrow\infty}F(u,v)=\hat F(u)$, then
\begin{equation}
    \label{eq:lemma:linear3} \underset{v\rightarrow\infty}\lim\int_U
    F(u,v)\;du=\int_U\underset{v\rightarrow\infty}\lim F(u,v
    )\;du=\int_U \hat{F}(u)\;du.
\end{equation}
\end{theorem}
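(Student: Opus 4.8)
The plan is to recognize \eqref{eq:lemma:linear3} as the classical Monotone Convergence Theorem and to prove it by first reducing the continuous-parameter statement to the standard sequential version, then establishing that sequential version directly from the definition of the Lebesgue integral. Because $F(u,v)$ is nonnegative and monotonically increasing in $v$, for \emph{any} increasing sequence $v_n\to\infty$ the functions $f_n(u)\dff F(u,v_n)$ form a non-decreasing sequence of nonnegative measurable functions converging pointwise to $\hat F(u)=\lim_{v\to\infty}F(u,v)$. If I can show $\int_U f_n\,du\to\int_U\hat F\,du$ for every such sequence, then since the real-valued map $v\mapsto\int_U F(u,v)\,du$ is itself monotone increasing (by monotonicity of $F$ and of the integral), its limit as $v\to\infty$ exists and equals the common sequential value, which is exactly \eqref{eq:lemma:linear3}. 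This reduction makes the particular choice of $\{v_n\}$ immaterial.

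For the sequential claim I would first dispose of the easy direction: from $0\le f_n\le f_{n+1}\le\hat F$ and monotonicity of the integral I obtain $\int f_n\le\int f_{n+1}\le\int\hat F$, so $L\dff\lim_n\int f_n$ exists in $[0,\infty]$ and satisfies $L\le\int\hat F$. The hypothesis that $\hat F$ is integrable then guarantees $L<\infty$, so the limit is a genuine real number, matching the intended application in Lemma~\ref{lemma:linear}.

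The substance is the reverse inequality $L\ge\int\hat F$, which I would obtain by the standard truncation argument. Fix any simple function $\varphi$ with $0\le\varphi\le\hat F$ and any constant $c\in(0,1)$, and set $E_n\dff\{u:f_n(u)\ge c\,\varphi(u)\}$. Since $f_n\uparrow\hat F\ge\varphi>c\varphi$ wherever $\varphi>0$, the sets satisfy $E_n\uparrow U$, whence $\int f_n\ge\int_{E_n}f_n\ge c\int_{E_n}\varphi$. Continuity of measure from below gives $\int_{E_n}\varphi\to\int\varphi$, so $L\ge c\int\varphi$; letting $c\to1$ and then taking the supremum over all simple minorants $\varphi\le\hat F$ yields $L\ge\int\hat F$ by the very definition of $\int\hat F$ as that supremum. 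Combining the two directions gives $L=\int\hat F$ and completes the sequential case.

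I expect this reverse inequality to be the only genuine obstacle, since it is the single step that cannot be read off from monotonicity alone and must appeal to the construction of the Lebesgue integral as a supremum over simple functions; the remaining steps are routine bookkeeping. Since the statement is reproduced verbatim from \cite[Theorem~4.6]{bartle:B1}, within the paper it would ordinarily be invoked by citation rather than reproved, and I would present the above only as the standard justification underlying that reference.
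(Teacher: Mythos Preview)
Your proof is correct and is precisely the standard argument for the Monotone Convergence Theorem: reduce the continuous-parameter limit to a sequential one via monotonicity, then establish the sequential version by the simple-function truncation argument with the auxiliary constant $c\in(0,1)$ and the increasing sets $E_n$. The steps are sound, including the observation that monotonicity of $v\mapsto\int_U F(u,v)\,du$ makes any choice of $v_n\uparrow\infty$ yield the same limit.

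As you correctly anticipate in your final paragraph, the paper does not prove this theorem at all: it is quoted as a standard result from \cite[Theorem~4.6]{bartle:B1} and used as a black box in the proof of Lemma~\ref{lemma:linear} to justify interchanging $\lim_{\snr\to\infty}$ with the integrals appearing in \eqref{eq:lemma:linear4}--\eqref{eq:lemma:linear5}. So there is no ``paper's own proof'' to compare against; your write-up simply supplies the textbook justification behind the citation, and would not appear in the paper itself.
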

For further simplifying (\ref{eq:lemma:linear2}), we define $F_1(u,\snr)$ and $F_2(u,\snr)$ over $\Big\{u\med u\in[-\pi,\pi], g(\bh,u)\neq  0\Big\}\times[1,+\infty]$ as
\begin{eqnarray*}
    F_1(u,\snr)&\dff&\frac{1}{g(\bh,u)}-\frac{1}{\snr^2
    g(\bh,u)}+\frac{g(\bh,u)}{(\snr g(\bh,u)+1)^2},\\
    \mbox{and}\;\;\;F_2(u,\snr)&\dff&\frac{1}{g(\bh,u)}-\frac{1}{\snr
    g(\bh,u)}+\frac{1}{\snr g(\bh,u)+1}.
\end{eqnarray*}
It can be readily verified that $F_i(u,\snr) > 0$ and $F_i(u,\snr)$ is increasing in $\snr$. Moreover, there exist $\hat F(u)$ such that
\begin{equation*}
    \hat F(u)=\lim_{\snr\rightarrow\infty}F_1(u,\snr)=\lim_{\snr\rightarrow\infty}F_2(u,\snr)=\frac{1}{g(\bh,u)}.
\end{equation*}
Therefore, by exploiting the result of  Theorem~\ref{th:monotone} we find
\begin{eqnarray*}
    &&\lim_{\snr\rightarrow\infty}\int\bigg[\frac{1}{g(\bh,u)}-\frac{1}{\snr^2
    g(\bh,u)}+\frac{g(\bh,u)}{(\snr g(\bh,u)+1)^2}\bigg]du=\int\frac{1}{g(\bh,u)}\;du,\\
    \mbox{and}&&\lim_{\snr\rightarrow\infty}\int\bigg[\frac{1}{g(\bh,u)}-\frac{1}{\snr
    g(\bh,u)}+\frac{1}{\snr
    g(\bh,u)+1}\bigg]du=\int\frac{1}{g(\bh,u)}\;du,
\end{eqnarray*}
or equivalently,
\begin{eqnarray}
    \label{eq:lemma:linear4}
    &&\lim_{\snr\rightarrow\infty}\frac{1}{2\pi}\int\frac{g(\bh,u)\;du}{(\snr
    g(\bh,u)+1)^2}=\lim_{\snr\rightarrow\infty}\frac{1}{2\pi}\int\frac{\;du}{\snr^2
    g(\bh,u)},\\
    \label{eq:lemma:linear5}
    \mbox{and}&&\lim_{\snr\rightarrow\infty}\frac{1}{2\pi}\int\frac{\;du}{\snr
    g(\bh,u)+1}=\lim_{\snr\rightarrow\infty}\frac{1}{2\pi}\int\frac{\;du}{\snr
    g(\bh,u)}.
\end{eqnarray}
By using the equalities in (\ref{eq:lemma:linear4})-(\ref{eq:lemma:linear5}) and proper replacement in (\ref{eq:lemma:linear2}) we get
\begin{align}
    \nonumber\lim_{\snr\to\infty} & \frac{\partial\;\mmsesnr}{\partial\;\snr}\\
    &=\lim_{\snr\to\infty}\bigg[\frac{1}{2\pi}\int_{g(\bh,u)\neq
    0}\frac{g(\bh,u)}{\left(\snr
    g(\bh,u)+1\right)^2}\;du\bigg]\cdot \bigg[\frac{1}{2\pi}\int_{g(\bh,u)\neq
    0}\frac{1}{\snr g(\bh,u)+1}\;du\bigg]^{-2}\\
    \nonumber &=\lim_{\snr\to\infty}\bigg[\frac{1}{2\pi}\int_{g(\bh,u)\neq
    0}\frac{1}{\snr^2
    g(\bh,u)}\;du\bigg]\cdot \bigg[\frac{1}{2\pi}\int_{g(\bh,u)\neq
    0}\frac{1}{\snr g(\bh,u)}\;du\bigg]^{-2}\\
    \nonumber &= \bigg[\frac{1}{2\pi}\int_{g(\bh,u)\neq
    0}\frac{1}{ g(\bh,u)}\;du\bigg]^{-1}=s(\bh),
\end{align}
where $s(\bh)$ is independent of $\snr$ and thus the proof is completed.

\renewcommand\url{\begingroup\urlstyle{rm}\Url}

\bibliographystyle{IEEEtran}
\bibliography{IEEEabrv,IIR_LE}

\begin{figure}
  \centering
  \includegraphics[width=4.5 in]{}\\
  \caption{Average detection error probability in  two-tap and three-tap ISI channels with MMSE linear  equalization.}\label{fig:1}
\end{figure}
\begin{figure}
  \centering
  \includegraphics[width=4.5 in]{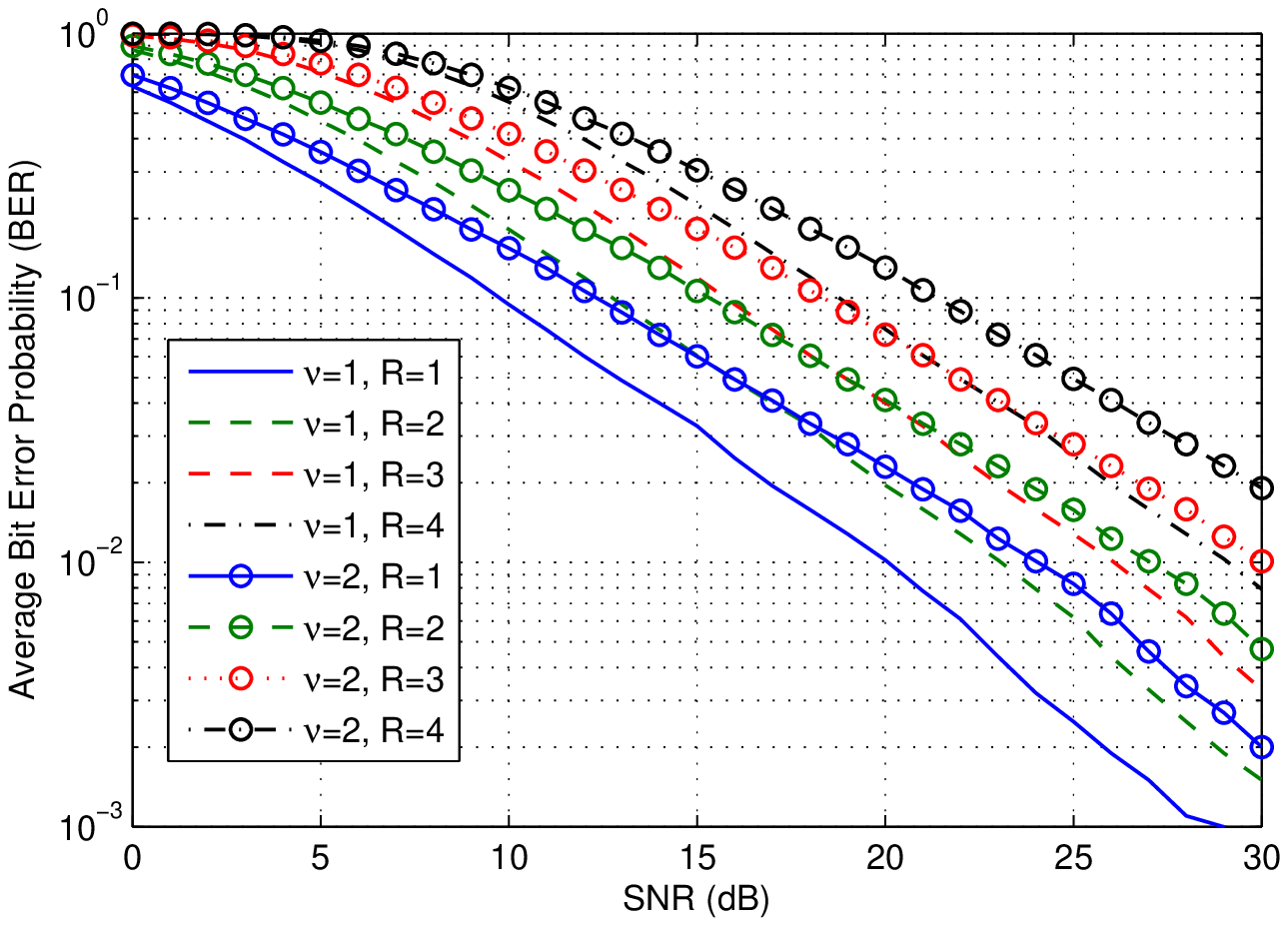}\\
  \caption{Average detection error probability in  two-tap and three-tap ISI channels with ZF linear equalization}\label{fig:2}
\end{figure}

\end{document}